\newcommand\suc[1]{\isamath{#1+1}}
\newenvironment{isadisplay}{%
  \begin{list}{}{%
    \setlength{\leftmargin}{0.25em}%
    \setlength{\rightmargin}{\leftmargin}}\item}%
{\end{list}}
\newcommand\minat[2]{\named{min}\subed{#2}\pared{#1}}
\newcommand\repl[3]{\ensuremath{#2\langle#1\rangle#3}}
\newcommand\strong{\ensuremath{\wqole_{\mathsf{o}}}}
\newcommand\weaklt{\ensuremath{\lhd}}
\newcommand\weakle{\ensuremath{\unlhd}}
\newcommand\suffix{\ensuremath{<}}
\newcommand\vals[1]{\ensuremath{#1^{\mathsf{o}}}}
\newcommand\wqolt{\ensuremath{\prec}}
\newcommand\wqole{\ensuremath{\preceq}}
\newcommand\lemb{\ensuremath{\wqole_{\mathsf{emb}}}}
\newcommand\hemb{\ensuremath{\wqole_{\mathsf{emb}}}}
\newcommand\embhemb{\ensuremath{\wqole^=_{\mathsf{emb}}}}
\newcommand\subtree[1][]{\ensuremath{\lhd}}
\newcommand\subtreebin[3][]{#2 \subtree[#1] #3}
\newcommand\nwqole{\ensuremath{\not\preceq}}
\newcommand\secref[1]{Section~\ref{sec:#1}}
\newcommand\lemref[1]{Lemma~\ref{lem:#1}}
\newcommand\thmref[1]{Theorem~\ref{thm:#1}}
\newcommand\defref[1]{Definition~\ref{def:#1}}
\newcommand\ceta{\textsf{C\kern-0.2exe\kern-0.5exT\kern-0.5exA}}
\newcommand\isafor{\textsf{Isa\kern-0.2exF\kern-0.2exo\kern-0.2exR}}
\newcommand\Emb{\mathcal{E}\textsf{mb}(\wqole)}
\newcommand\pared[1]{\isacharparenleft#1\isacharparenright}
\newcommand\subed[1]{\isactrlbsub#1\isactrlesub}
\newcommand\named[1]{\isamath{\mathit{#1}}}
\newcommand\caled[1]{\isamath{\mathcal{#1}}}
\newcommand\reflon[2]{\named{refl}\subed{#2}\pared{#1}}
\newcommand\transon[2]{\named{trans}\subed{#2}\pared{#1}}
\newcommand\afon[2]{\named{af}\subed{#2}\pared{#1}}
\newcommand\wqoon[2]{\named{wqo}\subed{#2}\pared{#1}}
\newcommand\wfon[2]{\named{wf}\subed{#2}\pared{#1}}
\newcommand\good[2]{\named{good}\subed{#1}\pared{#2}}
\newcommand\bad[2]{\named{bad}\subed{#1}\pared{#2}}
\newcommand\lists[1]{\ensuremath{#1^*}}
\newcommand\trees[2][]{\caled{T}\pared{#2}}
\newcommand\treeslist[2][]{\caled{T}\subed{list}\pared{#2}}
\theoremstyle{definition}
\newtheorem{definition}{Definition}
\theoremstyle{plain}
\newtheorem{theorem}{Theorem}
\newtheorem{lemma}{Lemma}
\title{A Locale for Minimal Bad Sequences}
\titlerunning{A Locale for Minimal Bad Sequences}
\author{Christian Sternagel\thanks{%
  Supported by the Austrian Science Fund (FWF): J3202.}}
\authorrunning{C.~Sternagel}
\institute{JAIST, Japan\\
  \email{c-sterna@jaist.ac.jp}}
\definecolor{dblue}{rgb}{0.2,0.2,0.7}%
\begin{document}
\maketitle

\begin{abstract}
We present a locale that abstracts over the necessary ingredients for
constructing a minimal bad sequence, as required in classical proofs of Higman's
lemma and Kruskal's tree theorem.
\end{abstract}

\bibliographystyle{plain}

\section{Introduction}

The so called \emph{minimal bad sequence argument} was first used by
Nash-Williams in \cite{Nash-Williams1963}; where he first proves a variant of
\emph{Higman's lemma} \cite{Higman1952} for finite sets, and then -- again using
a minimal bad sequence argument -- \emph{Kruskal's tree theorem}.  This proof is
usually considered to be simple and elegant. To a certain extend we agree, but
then again, formalizing a proof (using a proof assistant) typically requires us
to be more rigorous than on paper. During our Isabelle/HOL \cite{Isabelle}
formalization of Higman's lemma and Kruskal's tree theorem \cite{afp-wqo} we
found that Nash-Williams' reasoning for constructing a minimal bad sequence is
far from comprehensive. That is, assuming that there exists a minimal bad
sequence, both proofs can be formalized almost exactly as presented in
\cite{Nash-Williams1963}.%
\footnote{Apart from two claims of the form ``the absence of a bad sequence of
a certain shape implies the absence of any bad sequence,'' whose proofs are
omitted.}
But to prove the existence of such a minimal bad sequence turns out to be rather
involved. (A step that is omitted in any classical paper-proof using the minimal
bad sequence argument we could catch sight of.)

To this end, we formalized a locale $\named{mbs}$ that encapsulates the required
ingredients for constructing a minimal bad sequence starting from an arbitrary
bad sequence. Interpreting this locale for lists (with list-embedding)
and finite trees (with homeomorphic embedding on finite trees) is easy and takes
care of the biggest ``gaps'' in the paper-proofs of Nash-Williams.

The remainder is structured as follows: In \secref{prelim} we give some
preliminaries on well-quasi-order (wqo) theory and put our minimal bad sequence
construction into context. Afterwards, in \secref{mbs}, we present our locale
and sketch our construction of a minimal bad sequence. Then, in \secref{appl},
we give two applications. Finally, we conclude in
\secref{concl}.

\section{\label{sec:prelim}Preliminaries}

In this section we recall well-quasi-orders as well as Higman's lemma and
Kruskal's tree theorem. This serves to give a context for our minimal bad
sequence construction. Moreover, we give basic definitions that are used
throughout our formalization.

In the following we use $\wqole$ for an arbitrary (not necessarily reflexive)
binary relation on the
elements of an arbitrary set $A$ (when not stated otherwise). In the literature,
well-quasi-orders are typically defined as follows:
\begin{definition}\label{def:wqo}
A set $A$ is \emph{well-quasi-ordered} (wqo) by $\wqole$ iff $\wqole$ is
reflexive, transitive, and satisfies: for every infinite sequence $f$ over
elements of $A$ there are indices $i < j$, s.t., $f~i \wqole f~j$.
\end{definition}
Using the terminology of \cite{Nash-Williams1963} an infinite sequence $f$ for
which we have indices $i < j$, s.t., $f~i \wqole f~j$ is called \emph{good}. A
sequence that is not good is called \emph{bad}.  Moreover, a relation $\wqole$
satisfying the last condition of \defref{wqo} (i.e., all infinite sequences over
elements of $A$ are good) is called \emph{almost full}.%
\footnote{The notion \emph{almost full} was first introduced in
\cite{Veldman&Benzem1993} and very recently revived in \cite{almost-full}.}

\begin{isabellebody}%
\def\isabellecontext{Content}%
\isadelimtheory
\endisadelimtheory
\isatagtheory
\endisatagtheory
{\isafoldtheory}%
\isadelimtheory
\endisadelimtheory
\isadelimproof
\endisadelimproof
\isatagproof
\endisatagproof
{\isafoldproof}%
\isadelimproof
\endisadelimproof
\begin{isamarkuptext}%
Now consider the above concepts as defined in our Isabelle/HOL formalization,
where we encode binary relations by functions of type \isa{\isasymalpha\ {\isaliteral{5C3C52696768746172726F773E}{\isasymRightarrow}}\ \isasymalpha\ {\isaliteral{5C3C52696768746172726F773E}{\isasymRightarrow}}\ bool}
(i.e., binary predicates) and infinite sequences by functions of type \isa{nat\ {\isaliteral{5C3C52696768746172726F773E}{\isasymRightarrow}}\ \isasymalpha}.

An infinite sequence \isa{f} is \emph{good} (w.r.t.\ a binary predicate \isa{\wqole}), written \isa{\good{\wqole}{f}}, iff \isa{{\isaliteral{5C3C6578697374733E}{\isasymexists}}i\ j{\isaliteral{2E}{\isachardot}}\ i\ {\isaliteral{3C}{\isacharless}}\ j\ {\isaliteral{5C3C616E643E}{\isasymand}}\ f\ i\ \wqole\ f\ j}.
It is \emph{bad} (w.r.t.\ \isa{\wqole}), written \isa{\bad{\wqole}{f}}, iff it is not
good.
A binary relation \isa{\wqole} is \emph{reflexive on} a set \isa{A}, written
\isa{\reflon{\wqole}{A}}, iff \isa{{\isaliteral{5C3C666F72616C6C3E}{\isasymforall}}a{\isaliteral{5C3C696E3E}{\isasymin}}A{\isaliteral{2E}{\isachardot}}\ a\ \wqole\ a}.
It is \emph{transitive on} \isa{A}, written \isa{\transon{\wqole}{A}}, iff \isa{{\isaliteral{5C3C666F72616C6C3E}{\isasymforall}}a{\isaliteral{5C3C696E3E}{\isasymin}}A{\isaliteral{2E}{\isachardot}}\ {\isaliteral{5C3C666F72616C6C3E}{\isasymforall}}b{\isaliteral{5C3C696E3E}{\isasymin}}A{\isaliteral{2E}{\isachardot}}\ {\isaliteral{5C3C666F72616C6C3E}{\isasymforall}}c{\isaliteral{5C3C696E3E}{\isasymin}}A{\isaliteral{2E}{\isachardot}}\ a\ \wqole\ b\ {\isaliteral{5C3C616E643E}{\isasymand}}\ b\ \wqole\ c\ {\isaliteral{5C3C6C6F6E6772696768746172726F773E}{\isasymlongrightarrow}}\ a\ \wqole\ c}.
It is \emph{almost full on} \isa{A}, written \isa{\afon{\wqole}{A}}, iff
\isa{{\isaliteral{5C3C666F72616C6C3E}{\isasymforall}}f{\isaliteral{2E}{\isachardot}}\ {\isaliteral{28}{\isacharparenleft}}{\isaliteral{5C3C666F72616C6C3E}{\isasymforall}}i{\isaliteral{2E}{\isachardot}}\ f\ i\ {\isaliteral{5C3C696E3E}{\isasymin}}\ A{\isaliteral{29}{\isacharparenright}}\ {\isaliteral{5C3C6C6F6E6772696768746172726F773E}{\isasymlongrightarrow}}\ \good{\wqole}{f}}.
The relation \isa{\wqole} is a \emph{wqo on} \isa{A}, written \isa{\wqoon{\wqole}{A}}, iff \isa{\transon{\wqole}{A}\ {\isaliteral{5C3C616E643E}{\isasymand}}\ \afon{\wqole}{A}}.
It \emph{is well-founded on} \isa{A}, written \isa{\wfon{\wqole}{A}}, iff \isa{{\isaliteral{5C3C6E6F743E}{\isasymnot}}\ {\isaliteral{28}{\isacharparenleft}}{\isaliteral{5C3C6578697374733E}{\isasymexists}}f{\isaliteral{2E}{\isachardot}}\ {\isaliteral{5C3C666F72616C6C3E}{\isasymforall}}i{\isaliteral{2E}{\isachardot}}\ f\ i\ {\isaliteral{5C3C696E3E}{\isasymin}}\ A\ {\isaliteral{5C3C616E643E}{\isasymand}}\ f\ {\isaliteral{28}{\isacharparenleft}}\suc{i}{\isaliteral{29}{\isacharparenright}}\ \wqole\ f\ i{\isaliteral{29}{\isacharparenright}}} (i.e., there are no infinite descending sequences
over elements of \isa{A}).
It is easy to see that every almost full relation is also reflexive.
\begin{lemma}\label{lem:af imp refl}
\isa{\afon{\wqole}{A}\ {\isaliteral{5C3C4C6F6E6772696768746172726F773E}{\isasymLongrightarrow}}\ \reflon{\wqole}{A}}
\end{lemma}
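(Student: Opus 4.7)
The plan is to reduce reflexivity at an arbitrary point $a \in A$ to almost-fullness applied to a suitably chosen infinite sequence. Specifically, fix $a \in A$; I need to show $a \wqole a$. The natural choice is the constant sequence $f$ defined by $f~i = a$ for all $i$, which evidently takes values in $A$.

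Applying $\afon{\wqole}{A}$ to this $f$ yields $\good{\wqole}{f}$, i.e.\ the existence of indices $i < j$ with $f~i \wqole f~j$. Unfolding $f$ at both positions collapses this to $a \wqole a$, which is exactly what $\reflon{\wqole}{A}$ requires at $a$. Since $a$ was arbitrary, this gives $\reflon{\wqole}{A}$.

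There is no real obstacle: the argument uses only the definitions of $\named{af}$ and $\named{refl}$ on $A$ together with the fact that a constant sequence with value in $A$ is a valid input to the almost-full condition. In Isabelle this should amount to an instantiation of the universally quantified $f$ in $\afon{\wqole}{A}$ with $\lambda i.\, a$, followed by elimination of the resulting existential and trivial rewriting. No appeal to transitivity, well-foundedness, or any structural property of $A$ is needed.
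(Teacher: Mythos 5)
Your proposal is correct and matches the paper's proof exactly: both fix an arbitrary $a \in A$, instantiate the almost-full condition with the constant sequence $\lambda i.\, a$, and collapse the resulting $f~i \wqole f~j$ to $a \wqole a$. Nothing to add.
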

\begin{proof}
Take an arbitrary but fixed \isa{x\ {\isaliteral{5C3C696E3E}{\isasymin}}\ A}. We have to show \isa{x\ \wqole\ x}.
To this end, consider the infinite sequence \isa{f\ i\ {\isaliteral{3D}{\isacharequal}}\ x}, which is an
infinite sequence over elements of \isa{A} and thus, since \isa{\wqole} is almost
full, we obtain indices \isa{i\ {\isaliteral{3C}{\isacharless}}\ j} with \isa{f\ i\ \wqole\ f\ j}. Since
\isa{f} equals \isa{x} at every position, we obtain the required \isa{x\ \wqole\ x}.
\end{proof}
Thus, any almost full relation that is transitive is a wqo and the other way
round.  Since, in our formalization, we strive for minimality, and furthermore
transitivity is not required in the proofs of Higman's lemma and Kruskal's tree
theorem (and typically easily added afterwards), we concentrate on almost full
relations.

Before we continue, note that wqos are interesting (at least) due to the
following fact: The strict part \isa{\wqolt} of a wqo \isa{\wqole} is
well-founded on \isa{A}.  Where \isa{{\isaliteral{22}{\isachardoublequote}}x\ \wqolt\ y\ {\isaliteral{3D}{\isacharequal}}\ {\isaliteral{28}{\isacharparenleft}}x\ \wqole\ y\ {\isaliteral{5C3C616E643E}{\isasymand}}\ y\ \nwqole\ x{\isaliteral{29}{\isacharparenright}}{\isaliteral{22}{\isachardoublequote}}}. 
\begin{lemma}
\isa{\wqoon{\wqole}{A}\ {\isaliteral{5C3C4C6F6E6772696768746172726F773E}{\isasymLongrightarrow}}\ \wfon{\wqolt}{A}}
\end{lemma}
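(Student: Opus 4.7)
The plan is to argue by contradiction. Assume there is an infinite sequence $f$ with $f~i \in A$ for all $i$ and $f~(\suc{i}) \wqolt f~i$ for all $i$. Unfolding the definition of $\wqolt$, this gives both $f~(\suc{i}) \wqole f~i$ and $f~i \nwqole f~(\suc{i})$ for every $i$. Since $\wqoon{\wqole}{A}$ unfolds to $\transon{\wqole}{A} \wedge \afon{\wqole}{A}$, and $f$ takes values in $A$, almost-fullness yields indices $i < j$ with $f~i \wqole f~j$.

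Next I would establish the auxiliary chaining fact: for all $k \ge \suc{i}$, $f~k \wqole f~(\suc{i})$. This follows by a short induction on $k$, using the per-step inequality $f~(\suc{k}) \wqole f~k$ together with $\transon{\wqole}{A}$ (all values of $f$ lie in $A$, so transitivity applies). Instantiating this at $k = j$ (which is permissible because $j \ge \suc{i}$, as $i < j$) yields $f~j \wqole f~(\suc{i})$.

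Combining $f~i \wqole f~j$ with $f~j \wqole f~(\suc{i})$ and invoking $\transon{\wqole}{A}$ once more gives $f~i \wqole f~(\suc{i})$, which directly contradicts the $\nwqole$-clause extracted from $f~(\suc{i}) \wqolt f~i$. Hence no such descending sequence exists, i.e., $\wfon{\wqolt}{A}$.

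The routine part is the contradiction at the end; the only mildly delicate step is the finite chaining $f~j \wqole f~(\suc{i})$, since in Isabelle it has to be carried out by an explicit induction (or by appealing to a lemma about transitive closure of a per-step relation) rather than by the informal ``\ldots'' of paper proofs. Care is also needed to keep track of the fact that every $f~k$ lies in $A$, which is what allows each single application of $\transon{\wqole}{A}$.
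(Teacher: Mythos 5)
Your route is sound and essentially the paper's: both proofs argue by contradiction, playing a hypothetical infinite $\wqolt$-descending sequence $f$ over $A$ against almost fullness, with the real work being a finite transitivity chain along the sequence. The bookkeeping differs slightly. The paper first shows $f$ is bad wholesale --- for all $i < j$ it derives $f\,j \wqolt f\,i$ using transitivity of the strict relation, hence $f\,i \nwqole f\,j$ --- and then contradicts almost fullness; you instead let almost fullness hand you a single good pair $i < j$ and chain $\wqole$ down to contradict one particular $\nwqole$ instance. Your version spares the (easy) observation that $\wqolt$ inherits transitivity from $\wqole$, at the price of a more delicate index computation.

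That delicacy bites in one place: the base case of your chaining induction fails as stated. At $k = i+1$ your claim reads $f\,(i+1) \wqole f\,(i+1)$, which is neither a per-step inequality nor obtainable from transitivity --- it is reflexivity, which you never established, and which is not an assumption here (the paper explicitly works with not-necessarily-reflexive relations). The repair is immediate: either invoke \lemref{af imp refl} (almost full implies reflexive on $A$, and $f\,(i+1) \in A$), or case-split on $j$: if $j = i+1$, the good pair $f\,i \wqole f\,j$ already contradicts $f\,i \nwqole f\,(i+1)$ with no chaining at all, and if $j \geq i+2$ you can start the induction at $k = i+2$, whose base $f\,(i+2) \wqole f\,(i+1)$ is a per-step assumption. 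Note that the paper's formulation dodges this wrinkle automatically, since its chain is built from strict steps whose base case $f\,(i+1) \wqolt f\,i$ is given outright.
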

\begin{proof}
Assume to the contrary that \isa{\wqolt} is not well-founded on \isa{A}.
Then there is an infinite descending sequence \isa{f} over elements of \isa{A}.  Hence, for all \isa{i\ {\isaliteral{3C}{\isacharless}}\ j} we have \isa{f\ j\ \wqolt\ f\ i},
since \isa{\wqolt} is transitive. Moreover, since \isa{\wqolt} is
irreflexive, \isa{f\ i\ \nwqole\ f\ j} for all \isa{i\ {\isaliteral{3C}{\isacharless}}\ j}. Thus \isa{f}
is bad, contradicting the fact that \isa{\wqole} is almost full.
\end{proof}

Now, consider Higman's lemma and Kruskal's tree theorem as stated in our
formalization.

\begin{lemma}[Higman's Lemma]\label{lem:higman}
\isa{\wqoon{\wqole}{A}\ {\isaliteral{5C3C4C6F6E6772696768746172726F773E}{\isasymLongrightarrow}}\ \wqoon{\lemb}{\lists{A}}}
\end{lemma}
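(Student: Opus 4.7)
The plan is to unfold $\wqoon{\lemb}{\lists{A}}$ into its two components, $\transon{\lemb}{\lists{A}}$ and $\afon{\lemb}{\lists{A}}$. Transitivity of list embedding, given $\transon{\wqole}{A}$, follows by a straightforward structural induction on the embedding witnesses, so the real content of the lemma is almost-fullness.

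For almost-fullness I would argue by contradiction in the style of Nash-Williams. Assuming that some bad sequence $f$ over $\lists{A}$ exists, I invoke the minimal bad sequence machinery supplied by the $\named{mbs}$ locale to obtain a minimal bad sequence $g$ over $\lists{A}$, where minimality is with respect to a well-founded measure on lists (for instance, length, or the strict sublist relation). Since the empty list embeds into every list, no $g~i$ can be empty, so each $g~i$ decomposes as a head $h~i \in A$ followed by a tail $t~i \in \lists{A}$.

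Applying $\afon{\wqole}{A}$ to the sequence of heads, I extract by an iterated Ramsey-style argument an infinite set of indices $i_0 < i_1 < \cdots$ along which the heads are pairwise monotone, i.e.\ $h~i_k \wqole h~i_\ell$ for all $k < \ell$. I then form a new sequence $g'$ that agrees with $g$ on positions below $i_0$ and takes the value $t~(i_{n - i_0})$ at position $n \geq i_0$. Since $g'$ shares the prefix $g~0, \ldots, g~(i_0 - 1)$ with $g$ but has the strictly shorter element $t~i_0$ at position $i_0$, minimality of $g$ forces $g'$ to be good. A short case analysis on the positions of a good pair in $g'$, combining $t~i \lemb g~i$ with the extracted monotonicity of the heads, then produces a good pair in $g$, contradicting the badness of $g$.

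The main obstacle, as the paper already emphasizes, is not the combinatorial bookkeeping above but the prior existence of $g$. The iterated ``pick an element of minimum measure whose prefix still extends to a bad sequence'' is a choice step whose well-definedness, together with the fact that the resulting sequence is itself bad, is precisely what the $\named{mbs}$ locale must supply. Once that abstraction is in place, the head-tail decomposition, the extraction of a monotone subsequence of heads from $\afon{\wqole}{A}$, and the final contradiction are essentially bookkeeping on top of it.
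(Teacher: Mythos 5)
Your overall route coincides with the paper's. The paper never spells out the Nash-Williams combinatorics for \lemref{higman} in the text at all: its whole point is that once the $\named{mbs}$ locale is interpreted for lists, the remainder goes through ``almost exactly as presented'' in Nash-Williams, and that the hard, usually-omitted step is precisely the existence of the minimal bad sequence, which the locale (Lemmata \ref{lem:step}, \ref{lem:base} and Theorem \ref{lem:mbs}) supplies. Your head/tail decomposition, the splice $g'$, and the appeal to minimality at position $i_0$ match that plan; your ``iterated Ramsey-style'' extraction of a pairwise monotone subsequence of heads is also the right move here, since the paper deliberately assumes only $\afon{\wqole}{A}$ (not transitivity), so a plain iteration of goodness does not suffice and a genuine Ramsey argument is needed.

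The one genuine slip is your parenthetical choice of minimality measure, ``for instance, length, or the strict sublist relation.'' Length does not interpret the locale: the locale demands right-compatibility, i.e.\ from $x \strong y$ and $y \weaklt z$ infer $x \strong z$, and with $\weaklt$ taken as ``shorter than'' this fails -- $\mathit{xs} \lemb \mathit{ys}$ and $|\mathit{ys}| < |\mathit{zs}|$ say nothing about $\mathit{xs} \lemb \mathit{zs}$. Moreover the clause $\forall i \geq n.\ \exists j \geq n.\ g\ i \weakle f\ j$ in the definition of $\minat{f}{n}$ is exactly what confines the lists of the spliced sequence to elements already occurring in the old one; with length it confines nothing, and your final case analysis breaks at the straddling case $k < i_0 \leq l$, which needs $g\ k \lemb g'\ l$ together with $g'\ l \weakle g\ j$ to yield a good pair $g\ k \lemb g\ j$ via right-compatibility. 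This is not incidental: the paper interprets the locale with the \emph{suffix} relation $\suffix$, and singles this choice out as a subtle point (credited to Ogawa in the acknowledgments). Your second alternative, the strict sublist/subsequence relation, would in fact work, since $\lemb$ absorbs sublists given reflexivity of $\wqole$ on $A$ (which follows from almost-fullness by \lemref{af imp refl}); but length does not, so the instantiation step of your proof needs that correction.
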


\begin{theorem}[Kruskal's Tree Theorem]\label{thm:kruskal}
\isa{\wqoon{\wqole}{A}\ {\isaliteral{5C3C4C6F6E6772696768746172726F773E}{\isasymLongrightarrow}}\ \wqoon{\hemb}{\trees[Node]{A}}}
\end{theorem}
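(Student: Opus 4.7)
The plan is to prove \thmref{kruskal} by the Nash-Williams minimal bad sequence argument, reducing almost-fullness of $\hemb$ on $\trees[Node]{A}$ to Higman's lemma (\lemref{higman}) applied to a carefully chosen set of subtrees. Transitivity of $\hemb$ is inherited from that of $\wqole$ by a routine structural induction, so I focus on establishing $\afon{\hemb}{\trees[Node]{A}}$.

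Assume for contradiction that $\hemb$ is not almost full on $\trees[Node]{A}$, so some bad sequence $f$ of trees exists. Invoking the $\named{mbs}$ locale (interpreted for trees, as one of the applications discussed in \secref{appl}), I obtain a \emph{minimal} bad sequence $g$ — minimal in the sense that replacing any tail by a sequence whose new head is a proper subtree (in the subtree order $\subtree$) of the replaced tree cannot yield a bad sequence. Decompose $g~i = \named{Node}~a_i~ts_i$ with root label $a_i \in A$ and immediate-subtree list $ts_i$, and set $S = \bigcup_i \setof(ts_i)$.

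The key step is to show $\afon{\hemb}{S}$. If not, pick a bad sequence $h$ over $S$ and let $n$ be least with $h~0 \in \setof(ts_n)$, so $h~0 \subtree g~n$. Form the hybrid
\[ g~0,\ g~1,\ \dots,\ g~(n-1),\ h~0,\ h~1,\ \dots \]
which is strictly smaller than $g$ at position $n$. I claim it is bad: a good pair entirely inside the prefix would contradict badness of $g$; one entirely inside the suffix would contradict badness of $h$; and a mixed pair $i < n \le j$ would give $g~i \hemb h~(j-n)$, but $h~(j-n)$ is a subtree of some $g~k$ with $k \ge n$, hence $g~i \hemb g~k$ with $i < k$ — again contradicting badness of $g$. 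This violates minimality of $g$, so $\afon{\hemb}{S}$ holds. Applying \lemref{higman} yields $\afon{\lemb}{\lists{S}}$, so $(ts_i)$ is good under $\lemb$. Since $(a_i)$ is good under $\wqole$ by assumption, I then use the standard fact that the pointwise product of two almost-full relations is almost full — proved by a short extraction argument, thinning $(a_i)$ to a $\wqole$-ascending subsequence and then applying $\afon{\lemb}{\lists{S}}$ to the corresponding children lists — to obtain a single pair $i < j$ with $a_i \wqole a_j$ \emph{and} $ts_i \lemb ts_j$. By definition of $\hemb$ this gives $g~i \hemb g~j$, contradicting badness of $g$.

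The main obstacle is not the high-level argument above, which is short, but the first paragraph: giving a precise, machine-checkable definition of ``minimal bad sequence'' so that the surgery in the third paragraph is actually rigorous (in particular, that the hybrid sequence is genuinely smaller in a well-founded sense and that a minimal bad sequence exists whenever any bad sequence does). This is exactly the content the $\named{mbs}$ locale of \secref{mbs} is designed to isolate; modulo its interpretation for $\trees[Node]{A}$ with $\subtree$, every other step above is elementary.
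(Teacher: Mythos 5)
Your overall route---interpret the $\named{mbs}$ locale for trees with the subtree relation $\subtree$, extract a minimal bad sequence $g$, show the set $S$ of immediate subtrees is almost full by surgery, then combine \lemref{higman} with almost-fullness of the root labels via a Ramsey-style product argument---is exactly the classical Nash-Williams argument that the paper's locale is designed to support (the paper itself gives no in-text proof of \thmref{kruskal} and defers to \cite{afp-wqo}). However, your surgery step has a genuine gap: you cut at the least $n$ with $h~0 \in \mathit{set}(\mathit{ts}_n)$, which controls only the \emph{head} of the spliced tail. A later element $h~i$ belongs to $\mathit{set}(\mathit{ts}_{k})$ for some $k$ that may well be \emph{smaller} than $n$, so your claim in the mixed case that ``$h~(j-n)$ is a subtree of some $g~k$ with $k \ge n$'' is unjustified, and without it the hybrid sequence need not be bad (the alleged good pair $g~i \hemb g~k$ may have $k \le i$). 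The same defect blocks the appeal to minimality itself: in the paper's definition of minimality at position $n$ (\secref{mbs}), the competitor sequence must satisfy $\forall i \ge n.\ \exists j \ge n.\ g~i \weakle f~j$, a condition your hybrid violates for precisely these stray tail elements, so even a bad hybrid would not contradict what the locale actually provides.

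The standard repair is to choose the cut point over the \emph{whole} bad sequence $h$, not its head: for each $i$ fix $k_i$ with $h~i \in \mathit{set}(\mathit{ts}_{k_i})$, pick $i^{*}$ minimizing $k_i$, set $n = k_{i^{*}}$, and splice the tail $h~i^{*}, h~(i^{*}{+}1), \ldots$ (bad, being a tail of a bad sequence) after $g~0, \ldots, g~(n-1)$. By minimality of $k_{i^{*}}$ every element of the tail is then a subtree of some $g~k$ with $k \ge n$, which simultaneously rescues the mixed-pair case and discharges the $\forall i \ge n.\ \exists j \ge n$ side condition of the locale's minimality predicate. With that correction the remainder of your argument is sound and matches the intended proof; two small remarks: transitivity of $\hemb$ needs no induction since it is one of the defining rules of the paper's inductive definition, and your Ramsey-style extraction of a $\wqole$-ascending subsequence is legitimate in the paper's classical setting even without assuming transitivity of $\wqole$.
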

In the above two statements, \isa{\lists{A}} denotes the set of finite lists
built over elements of \isa{A} and \isa{\trees[Node]{A}} denotes the set of finite
trees built over elements of \isa{A}. The binary relation \isa{\lemb},
denotes homeomorphic embedding on finite lists and finite trees, respectively
(see \secref{appl} for concrete definitions). (Note that the interesting parts
of the above proofs correspond to \isa{\afon{\wqole}{A}\ {\isaliteral{5C3C4C6F6E6772696768746172726F773E}{\isasymLongrightarrow}}\ \afon{\lemb}{\lists{A}}} and \isa{\afon{\wqole}{A}\ {\isaliteral{5C3C4C6F6E6772696768746172726F773E}{\isasymLongrightarrow}}\ \afon{\hemb}{\trees[Node]{A}}}, respectively.)

In both proofs (as presented by Nash-Williams) the existence of a minimal bad
sequence is essential. However, the only thing Nash-Williams has to say about
the construction of a minimal bad sequence is roughly (where we use $\vals{A}$
to denote the set of ``objects'' built over elements of $A$; which might refer
to the set of finite subsets, the set of finite lists, the set of finite trees,
\ldots in a concrete case):
\begin{quote}\sl
Select an $x_1 \in \vals{A}$ such that $x_1$ is the first term of a bad sequence
of members of $\vals{A}$ and $|x_1|$ is as small as possible. Then select an
$x_2$ such that $x_1$, $x_2$ (in that order) are the first two terms of a bad
sequence of members of $\vals{A}$ and $|x_2|$ is as small as possible [\ldots].
Assuming the Axiom of Choice, this process yields a [minimal] bad sequence
[\ldots]
\end{quote}
Interestingly, most non-formalized proofs of Higman's lemma and Kruskal's tree
theorem in the literature (that the author is aware of) are similarly vague
about the actual construction of a minimal bad sequence.  The point is that a
crucial proof is missing in the above recipe, namely that it is actually
possible to select elements as described.

In the next section we make the notion \emph{minimal bad sequence} more concrete
(i.e., answer the question: ``Minimal in what sense?'') but at the same time
abstract over the basic ingredients (``What are the properties that have to be
satisfied for a minimal bad sequence to exist?'').%
\end{isamarkuptext}%
\isamarkuptrue%
\isamarkupsection{\label{sec:mbs}Constructing Minimal Bad Sequences%
}
\isamarkuptrue%
\begin{isamarkuptext}%
We encapsulate the construction of a minimal bad sequence over elements from a
given set (which we call \emph{objects}) inside a locale taking the following
arguments:
\begin{itemize}
\item
a function \isa{\vals{A}} that returns the set of objects that are built over
elements of \isa{A},
\item
a relation \isa{\strong} that is used to check whether an infinite sequence
of objects is good (where $\wqole$ is a relation on elements of \isa{A}),
\item
and a relation \isa{\weaklt} used for checking minimality (whose reflexive
closure is denoted by \isa{\weakle}).
\end{itemize}
The required properties are:
\begin{itemize}
\item
\emph{right-compatibility} of \isa{\weaklt} with \isa{\strong}:
\isa{{\isaliteral{5C3C6C6272616B6B3E}{\isasymlbrakk}}x\ \strong\ y{\isaliteral{3B}{\isacharsemicolon}}\ y\ \weaklt\ z{\isaliteral{5C3C726272616B6B3E}{\isasymrbrakk}}\ {\isaliteral{5C3C4C6F6E6772696768746172726F773E}{\isasymLongrightarrow}}\ x\ \strong\ z}
\item
\emph{well-foundedness} of \isa{\weaklt} on elements of \isa{\vals{A}}:
\isa{\wfon{\weaklt}{\vals{A}}}
\item
\emph{transitivity} of \isa{\weaklt}:
\isa{{\isaliteral{5C3C6C6272616B6B3E}{\isasymlbrakk}}x\ \weaklt\ y{\isaliteral{3B}{\isacharsemicolon}}\ y\ \weaklt\ z{\isaliteral{5C3C726272616B6B3E}{\isasymrbrakk}}\ {\isaliteral{5C3C4C6F6E6772696768746172726F773E}{\isasymLongrightarrow}}\ x\ \weaklt\ z}
\item
\isa{\weaklt} \emph{reflects} the property of being in \isa{\vals{A}}:
\isa{{\isaliteral{5C3C6C6272616B6B3E}{\isasymlbrakk}}x\ \weaklt\ y{\isaliteral{3B}{\isacharsemicolon}}\ y\ {\isaliteral{5C3C696E3E}{\isasymin}}\ \vals{A}{\isaliteral{5C3C726272616B6B3E}{\isasymrbrakk}}\ {\isaliteral{5C3C4C6F6E6772696768746172726F773E}{\isasymLongrightarrow}}\ x\ {\isaliteral{5C3C696E3E}{\isasymin}}\ \vals{A}}
\end{itemize}

In the following, we will need a way of piecing together infinite sequences.
Given two infinite sequences \isa{f} and \isa{g}, we can splice them at
position \isa{n}, s.t., in the resulting sequence all elements at positions
smaller than \isa{n} are taken from \isa{f} and all others are taken from
\isa{g}. This operation is written \isa{\repl{n}{f}{g}} and defined by
\isa{\repl{n}{f}{g}\ {\isaliteral{5C3C65717569763E}{\isasymequiv}}\ {\isaliteral{5C3C6C616D6264613E}{\isasymlambda}}j{\isaliteral{2E}{\isachardot}}\ \textsf{if}\ n\ {\isaliteral{5C3C6C653E}{\isasymle}}\ j\ \textsf{then}\ g\ j\ \textsf{else}\ f\ j}.

Furthermore, we say that an infinite sequence \isa{f} is \emph{minimal at} a
position \isa{n}, written \isa{\minat{f}{n}}, if all ``subsequences'' of
\isa{f} that coincide on the first \isa{n\ {\isaliteral{2D}{\isacharminus}}\ {\isadigit{1}}} elements and have a
smaller (w.r.t.\ \isa{\weaklt}) \isa{n}-th element are good (w.r.t.\ \isa{\strong}). The sense in which we use ``subsequence'' here, is made clear by
the following definition:
\begin{isadisplay}
\isa{\minat{f}{n}\ {\isaliteral{5C3C65717569763E}{\isasymequiv}}\ {\isaliteral{5C3C666F72616C6C3E}{\isasymforall}}g{\isaliteral{2E}{\isachardot}}\ {\isaliteral{28}{\isacharparenleft}}{\isaliteral{5C3C666F72616C6C3E}{\isasymforall}}i{\isaliteral{3C}{\isacharless}}n{\isaliteral{2E}{\isachardot}}\ g\ i\ {\isaliteral{3D}{\isacharequal}}\ f\ i{\isaliteral{29}{\isacharparenright}}\ {\isaliteral{5C3C616E643E}{\isasymand}}\ g\ n\ \weaklt\ f\ n\ {\isaliteral{5C3C616E643E}{\isasymand}}\ {\isaliteral{28}{\isacharparenleft}}{\isaliteral{5C3C666F72616C6C3E}{\isasymforall}}i{\isaliteral{5C3C67653E}{\isasymge}}n{\isaliteral{2E}{\isachardot}}\ {\isaliteral{5C3C6578697374733E}{\isasymexists}}j{\isaliteral{5C3C67653E}{\isasymge}}n{\isaliteral{2E}{\isachardot}}\ g\ i\ \weakle\ f\ j{\isaliteral{29}{\isacharparenright}}\ {\isaliteral{5C3C6C6F6E6772696768746172726F773E}{\isasymlongrightarrow}}\ \good{\strong}{g}}
\end{isadisplay}
which makes sure that objects in \isa{g} only contain elements that where
already present in some object of \isa{f}.

Now the key lemma in the construction of a minimal bad sequence is the
following:
\begin{lemma}\label{lem:step}\mbox{}
\begin{isadisplay}
\isa{{\isaliteral{5C3C6C6272616B6B3E}{\isasymlbrakk}}f\ {\isaliteral{28}{\isacharparenleft}}\suc{n}{\isaliteral{29}{\isacharparenright}}\ {\isaliteral{5C3C696E3E}{\isasymin}}\ \vals{A}{\isaliteral{3B}{\isacharsemicolon}}\ \minat{f}{n}{\isaliteral{3B}{\isacharsemicolon}}\ \bad{\strong}{f}{\isaliteral{5C3C726272616B6B3E}{\isasymrbrakk}}\isanewline
{\isaliteral{5C3C4C6F6E6772696768746172726F773E}{\isasymLongrightarrow}}\ {\isaliteral{5C3C6578697374733E}{\isasymexists}}g{\isaliteral{2E}{\isachardot}}\ {\isaliteral{28}{\isacharparenleft}}{\isaliteral{5C3C666F72616C6C3E}{\isasymforall}}i{\isaliteral{5C3C6C653E}{\isasymle}}n{\isaliteral{2E}{\isachardot}}\ g\ i\ {\isaliteral{3D}{\isacharequal}}\ f\ i{\isaliteral{29}{\isacharparenright}}\ {\isaliteral{5C3C616E643E}{\isasymand}}\isanewline
\isaindent{{\isaliteral{5C3C4C6F6E6772696768746172726F773E}{\isasymLongrightarrow}}\ {\isaliteral{5C3C6578697374733E}{\isasymexists}}g{\isaliteral{2E}{\isachardot}}\ }g\ {\isaliteral{28}{\isacharparenleft}}\suc{n}{\isaliteral{29}{\isacharparenright}}\ \weakle\ f\ {\isaliteral{28}{\isacharparenleft}}\suc{n}{\isaliteral{29}{\isacharparenright}}\ {\isaliteral{5C3C616E643E}{\isasymand}}\isanewline
\isaindent{{\isaliteral{5C3C4C6F6E6772696768746172726F773E}{\isasymLongrightarrow}}\ {\isaliteral{5C3C6578697374733E}{\isasymexists}}g{\isaliteral{2E}{\isachardot}}\ }{\isaliteral{28}{\isacharparenleft}}{\isaliteral{5C3C666F72616C6C3E}{\isasymforall}}i{\isaliteral{5C3C67653E}{\isasymge}}\suc{n}{\isaliteral{2E}{\isachardot}}\ {\isaliteral{5C3C6578697374733E}{\isasymexists}}j{\isaliteral{5C3C67653E}{\isasymge}}\suc{n}{\isaliteral{2E}{\isachardot}}\ g\ i\ \weakle\ f\ j{\isaliteral{29}{\isacharparenright}}\ {\isaliteral{5C3C616E643E}{\isasymand}}\isanewline
\isaindent{{\isaliteral{5C3C4C6F6E6772696768746172726F773E}{\isasymLongrightarrow}}\ {\isaliteral{5C3C6578697374733E}{\isasymexists}}g{\isaliteral{2E}{\isachardot}}\ }\bad{\strong}{\repl{\suc{n}}{f}{g}}\ {\isaliteral{5C3C616E643E}{\isasymand}}\ \minat{\repl{\suc{n}}{f}{g}}{\suc{n}}}
\end{isadisplay}
\end{lemma}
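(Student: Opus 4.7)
The plan is the textbook minimal-bad-sequence selection step: at position $n+1$, pick a $\weaklt$-smallest value that still admits a bad continuation, and invoke well-foundedness of $\weaklt$ on $\vals{A}$. Concretely, let $G$ be the set of ``candidate'' sequences $g$ satisfying four conditions: (i) $g\,i = f\,i$ for all $i \le n$; (ii) $g(n+1) \weakle f(n+1)$; (iii) for every $i \ge n+1$ there is a $j \ge n+1$ with $g\,i \weakle f\,j$; and (iv) $g$ is bad with respect to $\strong$. The hypotheses place $f$ itself in $G$ (take $j := i$ for (iii); use reflexivity of $\weakle$), so the projection $Y = \{\, g(n+1) : g \in G\,\}$ is non-empty. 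Moreover $Y \subseteq \vals{A}$: any $y \in Y$ either equals $f(n+1) \in \vals{A}$, or satisfies $y \weaklt f(n+1)$ and hence lies in $\vals{A}$ by the reflection property of $\weaklt$.

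By well-foundedness of $\weaklt$ on $\vals{A}$ (in its classical form: every non-empty subset has a $\weaklt$-minimal element), pick a minimal $y^\ast \in Y$ and, by choice, a witness $g^\ast \in G$ with $g^\ast(n+1) = y^\ast$. Because $g^\ast$ agrees with $f$ on positions $\le n$, splicing at $n+1$ leaves $g^\ast$ unchanged, and the first four conjuncts of the conclusion follow immediately from $g^\ast \in G$. Only the minimality conjunct requires real work.

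For that last conjunct, unfold its definition: suppose $h$ satisfies $h\,i = g^\ast\,i$ for $i < n+1$, $h(n+1) \weaklt y^\ast$, and $\forall i \ge n+1.\ \exists j \ge n+1.\ h\,i \weakle g^\ast\,j$; we must show that $h$ is good. Assume for contradiction that $h$ is bad. I will verify $h \in G$, whence $h(n+1) \in Y$ with $h(n+1) \weaklt y^\ast$, contradicting the minimality of $y^\ast$. Condition (i) for $h$ is immediate; (ii) follows from $h(n+1) \weaklt y^\ast \weakle f(n+1)$ by transitivity of $\weaklt$ together with a small case analysis on $\weakle$; (iii) follows by chaining $h\,i \weakle g^\ast\,j \weakle f\,k$ using transitivity of $\weakle$ (inherited from that of $\weaklt$); and (iv) is the contradictory assumption.

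The only non-routine step is this final minimality argument; everything else is unfolding definitions and chaining transitivities. One subtle point worth flagging for the formalisation is extracting the $\weaklt$-minimal $y^\ast$ from the ``no infinite descending sequence'' form of well-foundedness --- in Isabelle/HOL this goes through classical logic plus choice.
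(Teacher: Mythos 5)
Your argument is correct, and it takes a genuinely different route from the paper's. The paper proves \lemref{step} by \emph{well-founded induction} along \weaklt{} on the value $f(n+1)$: the conclusion is established for an arbitrary sequence by a case analysis on whether that sequence is already minimal at position $n+1$, and in the negative case a counterexample to minimality supplies a \weaklt-smaller value at $n+1$ to which the induction hypothesis is applied (after re-establishing the lemma's hypotheses for a spliced sequence). You instead invoke well-foundedness exactly once, in its classical minimal-element form, on the projection $Y$ of your candidate set $G$ to position $n+1$, and discharge the minimality conjunct by a single contradiction argument --- any bad counterexample $h$ lands back in $G$ with a strictly smaller value at $n+1$. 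Classically the two principles are interderivable, so this is chiefly a difference of organization, but your version is arguably closer to Nash-Williams' informal ``choose $x_{n+1}$ as small as possible'' and avoids the induction and its case analysis altogether. A noteworthy by-product: you never use the hypothesis \minat{f}{n}, so your argument in fact establishes a slightly stronger statement; the paper's inductive formulation carries that hypothesis along, presumably because the induction step must re-establish it for the spliced sequence before the induction hypothesis applies (and it is in any case needed where the lemma is used, in Theorem~\ref{lem:mbs}). Your verification of conditions (i)--(iv) for $h$ --- the small case split between $\weakle$ and $\weaklt$, the transitivity chaining for the tail-domination condition, the reflection property giving $Y \subseteq \vals{A}$, and the observation that \repl{n+1}{f}{g^{\ast}} coincides with $g^{\ast}$ because $g^{\ast}$ agrees with $f$ below $n+1$ --- is precisely the bookkeeping the formalization must perform, and your closing caveat is apt: since the locale states \wfon{\weaklt}{\vals{A}} as absence of infinite descending sequences, extracting a \weaklt-minimal element of $Y$ indeed requires classical logic plus (dependent) choice, which is unproblematic in Isabelle/HOL but worth making explicit.
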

\begin{proof}
Let $P(f)$ abbreviate the conclusion of the key lemma.
We use the well-founded induction principle induced by the well-foundedness of
\isa{\weaklt} on the term \isa{f\ {\isaliteral{28}{\isacharparenleft}}\suc{n}{\isaliteral{29}{\isacharparenright}}}. As a result, we have to show
$P(g)$ for some arbitrary but fixed sequence $g$. We proceed by a case analysis
on whether $g$ is already minimal at position \isa{\suc{n}} or not. For
details, check \cite{afp-wqo}.
\end{proof}

By \lemref{step} we obtain a bad sequence that is minimal at \isa{\suc{n}}
from a bad sequence that is minimal at \isa{n}. This allows us to inductively
define a (globally) minimal bad sequence. The only missing part is that there
actually is a bad sequence that is minimal at \isa{{\isadigit{0}}}, which is shown
by the following lemma:
\begin{lemma}\label{lem:base}
\begin{isadisplay}
\isa{{\isaliteral{5C3C6C6272616B6B3E}{\isasymlbrakk}}f\ {\isadigit{0}}\ {\isaliteral{5C3C696E3E}{\isasymin}}\ \vals{A}{\isaliteral{3B}{\isacharsemicolon}}\ \bad{\strong}{f}{\isaliteral{5C3C726272616B6B3E}{\isasymrbrakk}}\ {\isaliteral{5C3C4C6F6E6772696768746172726F773E}{\isasymLongrightarrow}}\ {\isaliteral{5C3C6578697374733E}{\isasymexists}}g{\isaliteral{2E}{\isachardot}}\ {\isaliteral{28}{\isacharparenleft}}{\isaliteral{5C3C666F72616C6C3E}{\isasymforall}}i{\isaliteral{2E}{\isachardot}}\ {\isaliteral{5C3C6578697374733E}{\isasymexists}}j{\isaliteral{2E}{\isachardot}}\ g\ i\ \weakle\ f\ j{\isaliteral{29}{\isacharparenright}}\ {\isaliteral{5C3C616E643E}{\isasymand}}\ \minat{g}{{\isadigit{0}}}\ {\isaliteral{5C3C616E643E}{\isasymand}}\ \bad{\strong}{g}}
\end{isadisplay}
\end{lemma}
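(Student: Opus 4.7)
The plan is to do well-founded induction on $\weaklt$ over $\vals{A}$, applied to a slightly strengthened statement that keeps $f$ fixed while descending. Concretely, I would prove:
\begin{isadisplay}
for every $x \in \vals{A}$ and every sequence $h$ with $h\,0 = x$, $\bad{\strong}{h}$, and $\forall i.\ \exists j.\ h\,i \weakle f\,j$, there exists $g$ satisfying the conclusion of the lemma.
\end{isadisplay}
The lemma itself then follows immediately by instantiating $x := f\,0$ and $h := f$, since $f\,i \weakle f\,i$ makes $f$ a witness for the bounding condition.

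For the induction step, fix $x \in \vals{A}$ and $h$ as above, and assume the claim for all $y \weaklt x$ with $y \in \vals{A}$. I would case-split on whether $\minat{h}{0}$ already holds. If it does, take $g := h$. Otherwise, by unfolding $\minat{\cdot}{0}$ there is a bad sequence $h'$ with $h'\,0 \weaklt h\,0$ and $\forall i.\ \exists j.\ h'\,i \weakle h\,j$. Combining this with the bounding property of $h$ via transitivity of $\weakle$ (which is immediate from transitivity of $\weaklt$) yields $\forall i.\ \exists j.\ h'\,i \weakle f\,j$; the \emph{reflects} property promotes $h'\,0$ to an element of $\vals{A}$ (from $h'\,0 \weaklt x$ and $x \in \vals{A}$); so the induction hypothesis applied at $h'\,0$ supplies the desired $g$.

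The main obstacle is formulating the correct strengthened induction hypothesis. Inducting on $h$ alone does not work, because the condition $\minat{g}{0}$ bounds counter-examples by $g$ while the conclusion asks for bounds against the original $f$; shifting $f$ along with $h$ in the IH would cause the chain of bounding properties to unravel. Once $f$ is fixed and the invariant ``all elements of $h$ are $\weakle$-bounded by elements of $f$'' is carried in the IH, transitivity handles the bookkeeping and well-foundedness of $\weaklt$ terminates the descent.
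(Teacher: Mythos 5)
Your proposal is correct and is essentially the paper's own proof: well-founded induction with respect to $\weaklt$ on the first element of the sequence (using well-foundedness of $\weaklt$ on $\vals{A}$), a case split on whether the sequence is already minimal at position $0$, and transitivity of $\weakle$ plus the reflection property to push the descent through, exactly the technique the paper transfers from \lemref{step}. One small correction to your closing remark: the variant in which the bounding condition tracks the varying sequence $h$ rather than a fixed $f$ does not unravel---one simply applies transitivity of $\weakle$ \emph{after} invoking the induction hypothesis instead of before---and that formulation is in fact the one suggested by the paper's phrasing of the induction in \lemref{step}.
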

\begin{proof}
We use the same techniques as in the proof of \lemref{step}, but the second part
of the case analysis is considerably simpler.
\end{proof}

Finally we are in a position to show the existence of a minimal bad sequence
over objects constructed from elements of \isa{A}.
\begin{theorem}\label{lem:mbs}\mbox{}
\begin{isadisplay}
\isa{{\isaliteral{5C3C6C6272616B6B3E}{\isasymlbrakk}}{\isaliteral{5C3C666F72616C6C3E}{\isasymforall}}i{\isaliteral{2E}{\isachardot}}\ f\ i\ {\isaliteral{5C3C696E3E}{\isasymin}}\ \vals{A}{\isaliteral{3B}{\isacharsemicolon}}\ \bad{\strong}{f}{\isaliteral{5C3C726272616B6B3E}{\isasymrbrakk}}\ {\isaliteral{5C3C4C6F6E6772696768746172726F773E}{\isasymLongrightarrow}}\ {\isaliteral{5C3C6578697374733E}{\isasymexists}}g{\isaliteral{2E}{\isachardot}}\ \bad{\strong}{g}\ {\isaliteral{5C3C616E643E}{\isasymand}}\ {\isaliteral{28}{\isacharparenleft}}{\isaliteral{5C3C666F72616C6C3E}{\isasymforall}}n{\isaliteral{2E}{\isachardot}}\ \minat{g}{n}{\isaliteral{29}{\isacharparenright}}\ {\isaliteral{5C3C616E643E}{\isasymand}}\ {\isaliteral{28}{\isacharparenleft}}{\isaliteral{5C3C666F72616C6C3E}{\isasymforall}}i{\isaliteral{2E}{\isachardot}}\ g\ i\ {\isaliteral{5C3C696E3E}{\isasymin}}\ \vals{A}{\isaliteral{29}{\isacharparenright}}}
\end{isadisplay}
\end{theorem}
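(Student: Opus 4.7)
The plan is to build a doubly-indexed family of bad sequences $f_0, f_1, f_2, \ldots$ such that each $f_n$ is bad, minimal at every position $0, 1, \ldots, n$, and consists entirely of elements of $\vals{A}$, and then to extract the desired sequence by the diagonal $g(i) := f_i(i)$. The two auxiliary facts driving the argument are \lemref{base} for the base case and \lemref{step} for the induction step.

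First I would apply \lemref{base} to the given bad sequence $f$, yielding $f_0$ that is bad, minimal at $0$, and satisfies $\forall i.\,\exists j.\,f_0(i) \weakle f(j)$; combined with the hypothesis that $\weaklt$ reflects membership in $\vals{A}$, this gives $f_0(i) \in \vals{A}$ for all $i$. Next, assuming an $f_n$ that is bad, minimal at $0, \ldots, n$, and in $\vals{A}$ pointwise, I would apply \lemref{step} at position $n$ to obtain $f_{n+1}$ that (i) agrees with $f_n$ on positions $\leq n$, (ii) satisfies $f_{n+1}(n+1) \weakle f_n(n+1)$, (iii) satisfies $\forall i \geq \suc{n}.\,\exists j \geq \suc{n}.\,f_{n+1}(i) \weakle f_n(j)$, (iv) is bad, and (v) is minimal at $\suc{n}$. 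A short argument using transitivity of $\weakle$ shows that minimality at each earlier position $k \leq n$ is inherited from $f_n$: any $g'$ witnessing non-minimality of $f_{n+1}$ at $k$ is, thanks to (i) and (iii) and transitivity of $\weakle$, also a witness against minimality of $f_n$ at $k$.

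Having constructed the family, I would set $g(i) := f_i(i)$. The key observation is that $g$ agrees with $f_n$ on positions $\leq n$, obtained by induction on $n$ from clause (i). This immediately gives membership ($g(i) = f_i(i) \in \vals{A}$) and badness (a good pair in $g$ at positions $i < j$ would, by agreement at level $j$, yield a good pair in $f_j$, contradicting $\bad{\strong}{f_j}$). For minimality of $g$ at an arbitrary $n$, I first establish by induction on $k \geq n$, using clause (iii) and transitivity of $\weakle$, that for every $k \geq n$ and every $i \geq k$ there is some $j \geq n$ with $f_k(i) \weakle f_n(j)$; specialising to $i = k$ yields $f_k(k) \weakle f_n(j)$. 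Combined with agreement on positions $\leq n$, this lets one transfer any candidate $g'$ witnessing non-minimality of $g$ at $n$ into a witness against $\minat{f_n}{n}$, which is a contradiction.

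The main obstacle is this last transfer step. The predicate $\minat{f}{n}$ quantifies over \emph{all} tail positions of $f$, so one cannot get away with inspecting a finite prefix; one must show that the $\weakle$-bounds assembled locally by each application of \lemref{step} compose coherently along the entire chain $f_n, f_{n+1}, f_{n+2}, \ldots$ and then survive the passage to the diagonal. Once transitivity of $\weakle$ is put to work carefully, however, the argument goes through as sketched.
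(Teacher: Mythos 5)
Your proposal is correct and follows essentially the same route as the paper: an iterated application of Lemma~\ref{lem:base} and Lemma~\ref{lem:step} (the paper packages the iteration via a choice function $\nu$ and the recursively defined family $m'$), followed by extraction of the diagonal $m(i) = m'(i)(i)$ and an induction with a strengthened hypothesis. Your explicit invariants --- prefix agreement, the composed $\weakle$-bounds obtained by transitivity, pointwise membership in $\vals{A}$, and minimality at each constructed position --- are precisely the ``considerable strengthening'' of the induction hypothesis that the paper mentions but defers to the formalization.
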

\begin{proof}
By \lemref{step} (which holds for every \isa{f} and \isa{n}) and the Axiom
of Choice, we obtain a choice function $\nu$, s.t., $\nu(f, n)$ yields the
corresponding witness. Moreover, by \lemref{base} we obtain a bad sequence
\isa{g} that is minimal at \isa{{\isadigit{0}}}. This allows us to define the
auxiliary sequence of sequences \isa{m{\isaliteral{27}{\isacharprime}}} recursively by:
\begin{align*%
}
m'(0) &= \nu(g, 0) \\
m'(n+1) &= \repl{n+1}{m'(n)}{\nu(m'(n), n)}
\end{align*%
}
The actual minimal bad sequence is then $m(i) = m'(i)(i)$. And the proof is
(mainly) shown by induction over $i$ (after considerably strengthening the
induction hypothesis). Again, we refer to \cite{afp-wqo} for the gory details.
\end{proof}%
\end{isamarkuptext}%
\isamarkuptrue%
\isadelimproof
\endisadelimproof
\isatagproof
\endisatagproof
{\isafoldproof}%
\isadelimproof
\endisadelimproof
\isamarkupsection{\label{sec:appl}Applications%
}
\isamarkuptrue%
\begin{isamarkuptext}%
Our current applications for the \isa{mbs} locale are \lemref{higman} and
\thmref{kruskal} (where the former is used in the proof of the latter). Thus, we
have to interpret the locale once in the context of lists and once in the
context of finite trees. For the latter we use the datatype
\begin{isadisplay}
\isa{\isacommand{datatype}\ \isasymalpha\ tree\ {\isaliteral{3D}{\isacharequal}}\ Node\ \isasymalpha\ {\isaliteral{28}{\isacharparenleft}}\isasymalpha\ tree\ list{\isaliteral{29}{\isacharparenright}}}
\end{isadisplay}
representing finite (non-empty) trees (isomorphic to ground terms of first-order
term rewriting as used in \cite{ceta}, where we want to apply our formalization
of wqo theory eventually).%
\end{isamarkuptext}%
\isamarkuptrue%
\begin{isamarkuptext}%
The three parameters for the list case are \isa{lists{\isaliteral{5C3C436F6C6F6E3E}{\isasymColon}}\isasymalpha\ set\ {\isaliteral{5C3C52696768746172726F773E}{\isasymRightarrow}}\ \isasymalpha\ list\ set}
(the set of lists over elements from a given set), \isa{\lemb{\isaliteral{5C3C436F6C6F6E3E}{\isasymColon}}\isasymalpha\ list\ {\isaliteral{5C3C52696768746172726F773E}{\isasymRightarrow}}\ \isasymalpha\ list\ {\isaliteral{5C3C52696768746172726F773E}{\isasymRightarrow}}\ bool} (homeomorphic
embedding on lists) and \isa{\suffix} (the suffix relation on lists), where we
use the following definitions:
\begin{definition}
The \emph{homeomorphic embedding} on lists, w.r.t.\ an arbitrary relation \isa{\wqole} on list elements, is defined inductively by
\begin{isadisplay}
\isa{{\isaliteral{5B}{\isacharbrackleft}}{\isaliteral{5D}{\isacharbrackright}}\ \lemb\ ys\isasep\isanewline%
xs\ \lemb\ ys\ {\isaliteral{5C3C4C6F6E6772696768746172726F773E}{\isasymLongrightarrow}}\ xs\ \lemb\ y\ {\isaliteral{23}{\isacharhash}}\ ys\isasep\isanewline%
{\isaliteral{5C3C6C6272616B6B3E}{\isasymlbrakk}}x\ \wqole\ y{\isaliteral{3B}{\isacharsemicolon}}\ xs\ \lemb\ ys{\isaliteral{5C3C726272616B6B3E}{\isasymrbrakk}}\ {\isaliteral{5C3C4C6F6E6772696768746172726F773E}{\isasymLongrightarrow}}\ x\ {\isaliteral{23}{\isacharhash}}\ xs\ \lemb\ y\ {\isaliteral{23}{\isacharhash}}\ ys}
\end{isadisplay}
which essentially says that we are allowed to drop elements or replace elements
by smaller ones (w.r.t., \isa{\wqole}) when going from right to left.
The \emph{suffix relation} on lists is given by
\begin{isadisplay}
\isa{xs\ \suffix\ ys\ {\isaliteral{5C3C65717569763E}{\isasymequiv}}\ {\isaliteral{5C3C6578697374733E}{\isasymexists}}us{\isaliteral{2E}{\isachardot}}\ ys\ {\isaliteral{3D}{\isacharequal}}\ us\ {\isaliteral{40}{\isacharat}}\ xs\ {\isaliteral{5C3C616E643E}{\isasymand}}\ us\ {\isaliteral{5C3C6E6F7465713E}{\isasymnoteq}}\ {\isaliteral{5B}{\isacharbrackleft}}{\isaliteral{5D}{\isacharbrackright}}}
\end{isadisplay}
\end{definition}
In order to obtain a \isa{\suffix}-minimal \isa{\lemb}-bad sequence we
have to prove the properties:
\begin{isadisplay}
\isa{{\isaliteral{5C3C6C6272616B6B3E}{\isasymlbrakk}}xs\ \lemb\ ys{\isaliteral{3B}{\isacharsemicolon}}\ ys\ \suffix\ zs{\isaliteral{5C3C726272616B6B3E}{\isasymrbrakk}}\ {\isaliteral{5C3C4C6F6E6772696768746172726F773E}{\isasymLongrightarrow}}\ xs\ \lemb\ zs} \\
\isa{\wfon{\suffix}{\lists{A}}} \\
\isa{{\isaliteral{5C3C6C6272616B6B3E}{\isasymlbrakk}}xs\ \suffix\ ys{\isaliteral{3B}{\isacharsemicolon}}\ ys\ \suffix\ zs{\isaliteral{5C3C726272616B6B3E}{\isasymrbrakk}}\ {\isaliteral{5C3C4C6F6E6772696768746172726F773E}{\isasymLongrightarrow}}\ xs\ \suffix\ zs} \\
\isa{{\isaliteral{5C3C6C6272616B6B3E}{\isasymlbrakk}}xs\ \suffix\ ys{\isaliteral{3B}{\isacharsemicolon}}\ ys\ {\isaliteral{5C3C696E3E}{\isasymin}}\ \lists{A}{\isaliteral{5C3C726272616B6B3E}{\isasymrbrakk}}\ {\isaliteral{5C3C4C6F6E6772696768746172726F773E}{\isasymLongrightarrow}}\ xs\ {\isaliteral{5C3C696E3E}{\isasymin}}\ \lists{A}}
\end{isadisplay}
all of which are easy.%
\end{isamarkuptext}%
\isamarkuptrue%
\begin{isamarkuptext}%
For the tree case, the parameters are \isa{\trees[Node]{A}{\isaliteral{5C3C436F6C6F6E3E}{\isasymColon}}\isasymalpha\ tree\ set} (the set of trees
over elements from a given set), \isa{\hemb{\isaliteral{5C3C436F6C6F6E3E}{\isasymColon}}\isasymalpha\ tree\ {\isaliteral{5C3C52696768746172726F773E}{\isasymRightarrow}}\ \isasymalpha\ tree\ {\isaliteral{5C3C52696768746172726F773E}{\isasymRightarrow}}\ bool} (homeomorphic embedding on trees) and \isa{\subtree[Node]} (the subtree relation; similar to the subterm relation on terms), where
we use the following definitions:
\begin{definition}
The \emph{set of trees} over elements from \isa{A} is given by
\begin{isadisplay}
\isa{{\isaliteral{5C3C6C6272616B6B3E}{\isasymlbrakk}}x\ {\isaliteral{5C3C696E3E}{\isasymin}}\ A{\isaliteral{3B}{\isacharsemicolon}}\ ts\ {\isaliteral{5C3C696E3E}{\isasymin}}\ \treeslist[Node]{A}{\isaliteral{5C3C726272616B6B3E}{\isasymrbrakk}}\ {\isaliteral{5C3C4C6F6E6772696768746172726F773E}{\isasymLongrightarrow}}\ Node\ x\ ts\ {\isaliteral{5C3C696E3E}{\isasymin}}\ \trees[Node]{A}\isasep\isanewline%
{\isaliteral{5B}{\isacharbrackleft}}{\isaliteral{5D}{\isacharbrackright}}\ {\isaliteral{5C3C696E3E}{\isasymin}}\ \treeslist[Node]{A}\isasep\isanewline%
{\isaliteral{5C3C6C6272616B6B3E}{\isasymlbrakk}}t\ {\isaliteral{5C3C696E3E}{\isasymin}}\ \trees[Node]{A}{\isaliteral{3B}{\isacharsemicolon}}\ ts\ {\isaliteral{5C3C696E3E}{\isasymin}}\ \treeslist[Node]{A}{\isaliteral{5C3C726272616B6B3E}{\isasymrbrakk}}\ {\isaliteral{5C3C4C6F6E6772696768746172726F773E}{\isasymLongrightarrow}}\ t\ {\isaliteral{23}{\isacharhash}}\ ts\ {\isaliteral{5C3C696E3E}{\isasymin}}\ \treeslist[Node]{A}}
\end{isadisplay}
\emph{Homomorphic embedding} on trees, w.r.t.\ an arbitrary relation \isa{\wqole}
on tree elements, is defined inductively by
\begin{isadisplay}
\isa{t\ {\isaliteral{5C3C696E3E}{\isasymin}}\ set\ ts\ {\isaliteral{5C3C4C6F6E6772696768746172726F773E}{\isasymLongrightarrow}}\ t\ \hemb\ Node\ f\ ts} \\
\isa{{\isaliteral{5C3C6C6272616B6B3E}{\isasymlbrakk}}f\ \wqole\ g{\isaliteral{3B}{\isacharsemicolon}}\ ss\ \embhemb\ ts{\isaliteral{5C3C726272616B6B3E}{\isasymrbrakk}}\ {\isaliteral{5C3C4C6F6E6772696768746172726F773E}{\isasymLongrightarrow}}\ Node\ f\ ss\ \hemb\ Node\ g\ ts} \\
\isa{{\isaliteral{5C3C6C6272616B6B3E}{\isasymlbrakk}}s\ \hemb\ t{\isaliteral{3B}{\isacharsemicolon}}\ t\ \hemb\ u{\isaliteral{5C3C726272616B6B3E}{\isasymrbrakk}}\ {\isaliteral{5C3C4C6F6E6772696768746172726F773E}{\isasymLongrightarrow}}\ s\ \hemb\ u} \\
\isa{s\ \hemb\ t\ {\isaliteral{5C3C4C6F6E6772696768746172726F773E}{\isasymLongrightarrow}}\ Node\ f\ {\isaliteral{28}{\isacharparenleft}}ss\ {\isaliteral{40}{\isacharat}}\ s\ {\isaliteral{23}{\isacharhash}}\ ts{\isaliteral{29}{\isacharparenright}}\ \hemb\ Node\ f\ {\isaliteral{28}{\isacharparenleft}}ss\ {\isaliteral{40}{\isacharat}}\ t\ {\isaliteral{23}{\isacharhash}}\ ts{\isaliteral{29}{\isacharparenright}}}
\end{isadisplay}
The \emph{subtree relation} is defined inductively by
\begin{isadisplay}
\isa{t\ {\isaliteral{5C3C696E3E}{\isasymin}}\ set\ ts\ {\isaliteral{5C3C4C6F6E6772696768746172726F773E}{\isasymLongrightarrow}}\ \subtreebin[Node]{t}{Node\ x\ ts}\isasep\isanewline%
{\isaliteral{5C3C6C6272616B6B3E}{\isasymlbrakk}}\subtreebin[Node]{s}{t}{\isaliteral{3B}{\isacharsemicolon}}\ t\ {\isaliteral{5C3C696E3E}{\isasymin}}\ set\ ts{\isaliteral{5C3C726272616B6B3E}{\isasymrbrakk}}\ {\isaliteral{5C3C4C6F6E6772696768746172726F773E}{\isasymLongrightarrow}}\ \subtreebin[Node]{s}{Node\ x\ ts}}
\end{isadisplay}
\end{definition}
In order to obtain a \isa{\subtree[Node]}-minimal \isa{\hemb}-bad sequence we
have to prove the properties:
\begin{isadisplay}
\isa{{\isaliteral{5C3C6C6272616B6B3E}{\isasymlbrakk}}s\ \hemb\ t{\isaliteral{3B}{\isacharsemicolon}}\ \subtreebin[Node]{t}{u}{\isaliteral{5C3C726272616B6B3E}{\isasymrbrakk}}\ {\isaliteral{5C3C4C6F6E6772696768746172726F773E}{\isasymLongrightarrow}}\ s\ \hemb\ u} \\
\isa{\wfon{\subtree[Node]}{\trees[Node]{A}}} \\
\isa{{\isaliteral{5C3C6C6272616B6B3E}{\isasymlbrakk}}\subtreebin[Node]{s}{t}{\isaliteral{3B}{\isacharsemicolon}}\ \subtreebin[Node]{t}{u}{\isaliteral{5C3C726272616B6B3E}{\isasymrbrakk}}\ {\isaliteral{5C3C4C6F6E6772696768746172726F773E}{\isasymLongrightarrow}}\ \subtreebin[Node]{s}{u}} \\
\isa{{\isaliteral{5C3C6C6272616B6B3E}{\isasymlbrakk}}\subtreebin[Node]{s}{t}{\isaliteral{3B}{\isacharsemicolon}}\ t\ {\isaliteral{5C3C696E3E}{\isasymin}}\ \trees[Node]{A}{\isaliteral{5C3C726272616B6B3E}{\isasymrbrakk}}\ {\isaliteral{5C3C4C6F6E6772696768746172726F773E}{\isasymLongrightarrow}}\ s\ {\isaliteral{5C3C696E3E}{\isasymin}}\ \trees[Node]{A}}
\end{isadisplay}
all of which are relatively easy.

Our bigger concern was (and rightly so) whether our definition of homeomorphic
embedding on trees really corresponds to what is usually used in the literature.
To this end, we used the definition of homeomorphic embedding that is used in
term rewriting (for first-order terms) as our specification.

\begin{definition}\label{def:Emb}
Let $\Emb$ be the (infinite) term rewrite system consisting of the following
rules:
\begin{align*%
}
f(\mathit{ts}) &\to t &&\text{if $t \in \mathit{set}~\mathit{ts}$}\\
f(\mathit{ts}) &\to g(\mathit{ss}) &&\text{if $g \wqole f$ and $\mathit{ss}
=_{\mathsf{emb}} \mathit{ts}$}
\end{align*%
}
\end{definition}
We were able (after adapting our initial inductive definition several times) to
prove%
\footnote{See theory \texttt{Embedding\_Trs} in the \isafor{} repository
\url{http://cl-informatik.uibk.ac.at/software/ceta}.}
that
\[
s \hemb t \longleftrightarrow t \to^+_{\Emb} s
\]
which reassures us that our definition is correct. (Note that for this proof we
had to use the datatype \isa{\isacommand{datatype}\ {\isaliteral{28}{\isacharparenleft}}\isasymalpha{\isaliteral{2C}{\isacharcomma}}\ \isasymbeta{\isaliteral{29}{\isacharparenright}}\ term\ {\isaliteral{3D}{\isacharequal}}\ Var\ \isasymbeta\ {\isaliteral{7C}{\isacharbar}}\ Fun\ \isasymalpha\ {\isaliteral{28}{\isacharparenleft}}{\isaliteral{28}{\isacharparenleft}}\isasymalpha{\isaliteral{2C}{\isacharcomma}}\ \isasymbeta{\isaliteral{29}{\isacharparenright}}\ term\ list{\isaliteral{29}{\isacharparenright}}} instead of \isa{\isasymalpha\ tree}.
However, it is easy to see that those two datatypes are isomorphic when
disregarding variables.)

\defref{Emb} is an adaption of the (potentially finite) term rewrite system from
\cite[Definition 3.4]{simple}, where we reuse the definition of homeomorphic
embedding on lists and avoid variables in order to simplify the above proof.%
\end{isamarkuptext}%
\isamarkuptrue%
\isamarkupsection{\label{sec:concl}Conclusions and Related Work%
}
\isamarkuptrue%
\begin{isamarkuptext}%
There have been formalizations of Higman's lemma in other proof assistants
\cite{murthy,fridlender,herbelin,seisenberger,martin-mateos} and also in
Isabelle/HOL \cite{berghofer} (but restricted to a two-letter alphabet of list
elements). Those existing formalizations usually strive for constructive proofs,
whereas our approach is purely classical. However, to the best of our knowledge
our work \cite{afp-wqo} constitutes the first formalization of Kruskal's tree
theorem ever.

Furthermore, \cite{myhill-nerode} could already make use of our formalization of
Higman's lemma for formalizing the following: \textsl{For an arbitrary language
$L$, the set of substrings/superstrings of words in $L$ is regular}.

A final remark, during the whole formalization process one of the key points was
to go away from proving facts (about binary predicates) on whole types and
instead make the carrier explicit. To this end, predicates like \isa{\reflon{\wqole}{A}}, \isa{\transon{\wqole}{A}}, \isa{\wfon{\wqole}{A}}, \ldots have been most
helpful and we plead to include them in the standard Isabelle/HOL distribution
and introduce their ``implicit'' cousins (working on whole types) as
abbreviations.

\paragraph{Acknowledgments.}
Thanks to Mizuhito Ogawa for helpful discussion and pointing out that the suffix
relation is appropriate for constructing a minimal bad sequence inside the proof
of Higman's lemma.
\bibliography{references}%
\end{isamarkuptext}%
\isamarkuptrue%
\isadelimtheory
\endisadelimtheory
\isatagtheory
\endisatagtheory
{\isafoldtheory}%
\isadelimtheory
\endisadelimtheory
\end{isabellebody}%

\end{document}